\documentclass[12pt, reqno]{amsart}

\usepackage[margin=2cm]{geometry} 
\setlength{\parindent}{0cm}
\usepackage{color}
\setlength{\parskip}{0mm plus 2mm minus 1mm}
\usepackage{graphicx}

\usepackage{amsmath}
\usepackage{amssymb}

 \theoremstyle{plain}

\newtheorem*{thm}{Theorem}

\usepackage{cite}


\title{A note on the Cobb-Douglas function}
\author{Richard Vale}
\email{rval012@aucklanduni.ac.nz}
\date{\today}
\begin{document}
\begin{abstract}
This note observes that the Cobb-Douglas function is uniquely characterized by the property that, if the labour share of cost for a constant-returns-to-scale firm remains constant when the firm minimizes its cost for any given output level, then the firm's production function must be Cobb-Douglas.
\end{abstract}
\maketitle
\section{Introduction}
\subsection{} The \emph{Cobb-Douglas production function} \cite{CD} is used in economics to describe the output of a firm or economy with inputs labour $L$ and capital $K$. It is defined as
$$Y(K, L) = A K^\alpha L^{1-\alpha}$$
where $A > 0$ and $0 < \alpha < 1.$ This function is so ubiquitous that it is interesting to look for mathematical properties which characterize it uniquely. Derivations of the Cobb-Douglas function have peadogogical value \cite{Koch2013} but usually require further economic assumptions, such as perfect competition in markets, for example \cite[Sections A,B]{CapitalLabour}.
\subsection{} One motivation (see for example \cite[Section 3.1]{gomme}) for using the Cobb-Douglas function in economic models is that the factor shares of output do not depend on the level of output, in the sense that if $w$ is the wage per unit of labour and $r$ is the rental rate per unit of capital and the economic profit
$$\Pi(K,L) := Y(K,L) - wL - rK$$
attains a global maximum then $wL/Y(K,L) = 1-\alpha$ is constant independent of $Y$. (This fact is sometimes called Bowley's Law \cite[Section 2]{SmirnovWang}.) However, this property does not characterize the Cobb-Douglas function because in general $\Pi(K,L)$ need not have a global maximum.
\subsection{} The aim of the present note is to prove the following theorem, which characterizes the Cobb-Douglas function.
\begin{thm}\label{maintheorem}
Let $Y: \mathbb{R}_{\ge 0}^2 \rightarrow \mathbb{R}_{\ge 0}$ be a differentiable function. Then there are $A > 0, 0 < \alpha < 1$ such that $Y(K,L) = AK^\alpha L^{1-\alpha}$ if and only if the following two conditions hold.
\begin{enumerate}
\item $Y$ has constant returns to scale (i.e. $Y$ is homogeneous of degree $1$).\label{one}
\item There exists $\beta > 0$ such that for all $w, r > 0$ and all $K_0, L_0 > 0$ the following are equivalent.\label{two}
\begin{enumerate}
\item There exists $q>0$ such that the cost function $c(K, L) = rK + wL$ has a unique minimum along the level set (``isoquant") $Y(K,L) = q$ at $(K_0, L_0)$.
\label{conda}
\item $K_0 = L_0 \frac{w}{r} \beta$.\label{condb}
\end{enumerate}
\end{enumerate}
\end{thm}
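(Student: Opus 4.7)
The plan is to handle the two directions of the equivalence separately; the forward direction is a short Lagrange-multiplier calculation, while the harder content lies in the converse.

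For the \emph{only if} direction, assume $Y(K,L)=AK^\alpha L^{1-\alpha}$. Condition~(\ref{one}) is immediate. For condition~(\ref{two}), I minimize $rK+wL$ subject to $Y(K,L)=q$ by Lagrange multipliers. Since $Y_K=\alpha Y/K$ and $Y_L=(1-\alpha)Y/L$, the stationarity relations $r=\lambda Y_K$, $w=\lambda Y_L$ collapse to $K_0=L_0(w/r)\beta$ with $\beta=\alpha/(1-\alpha)$, which is condition~(\ref{condb}). The isoquants of a Cobb-Douglas function are strictly convex curves in the open positive orthant, so the linear functional $rK+wL$ attains its minimum at a unique point, establishing (\ref{conda}). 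Conversely, if $(K_0,L_0)$ satisfies (\ref{condb}), setting $q:=Y(K_0,L_0)$ realises $(K_0,L_0)$ as that unique minimizer.

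For the \emph{if} direction, fix $(K_0,L_0)$ in the open positive orthant and choose $r,w>0$ with $w/r=K_0/(\beta L_0)$, so that (\ref{condb}) holds. By (\ref{condb})$\Rightarrow$(\ref{conda}) there exists $q>0$ such that $(K_0,L_0)$ minimizes $rK+wL$ on $\{Y=q\}$; in particular $Y(K_0,L_0)=q>0$. Euler's identity $KY_K+LY_L=Y$ then forces $\nabla Y(K_0,L_0)\neq 0$, so the Lagrange condition $rY_L=wY_K$ is valid and rearranges to
$$K_0\,Y_K(K_0,L_0)=\beta L_0\,Y_L(K_0,L_0).$$
Since $(K_0,L_0)$ was arbitrary, this PDE holds throughout the open orthant. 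Substituting into Euler's identity and writing $\alpha:=\beta/(\beta+1)\in(0,1)$ gives $Y_K=\alpha Y/K$ and $Y_L=(1-\alpha)Y/L$. Integrating $\partial_K\log Y=\alpha/K$ and $\partial_L\log Y=(1-\alpha)/L$ yields $Y(K,L)=AK^\alpha L^{1-\alpha}$ on the open orthant for some $A>0$, and continuity extends this to $\mathbb{R}_{\ge 0}^2$.

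The principal obstacle in the converse is the justification of the Lagrange condition at $(K_0,L_0)$: one must use that the $q$ provided by (\ref{conda}) is positive, and then invoke Euler's identity to rule out $\nabla Y(K_0,L_0)=0$. A related subtlety is that (\ref{conda}) only guarantees the \emph{existence} of such a $q$, but since $(K_0,L_0)\in\{Y=q\}$ one has $q=Y(K_0,L_0)$ automatically, so the PDE is read off at the prescribed point. Once these technicalities are cleared up, the derivation of the Cobb-Douglas form amounts to integrating a pair of separable first-order equations.
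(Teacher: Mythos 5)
Your proposal is correct and follows essentially the same route as the paper: choose $w,r$ so that (\ref{condb}) holds at an arbitrary $(K_0,L_0)$, use the Lagrange condition from (\ref{conda}) to get $K Y_K=\beta L Y_L$ everywhere, and combine with Euler's identity to integrate out the Cobb-Douglas form. The only differences are cosmetic refinements: in the forward direction you invoke convexity of the isoquants where the paper minimizes explicitly by one-variable calculus, and you add the (welcome) observations that $q=Y(K_0,L_0)>0$ and that Euler's identity rules out $\nabla Y(K_0,L_0)=0$, which the paper leaves implicit.
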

\begin{proof}
Suppose $Y$ is Cobb-Douglas. Then $Y$ has constant returns to scale. Define $\beta = \frac{\alpha}{1-\alpha}$. Let us show that $(\ref{conda}) \implies (\ref{condb}).$ If $w, r, q > 0$ and $Y(K,L) = q$ then
$c(K, L) = wL + r(q/A)^{1/\alpha}L^{\frac{\alpha -1}{\alpha}}$
and calculus shows that $c(K,L)$ has a unique minimum given by
$$L_0 = \left(\frac{r}{w}\right)^\alpha\left(\frac{q}{A}\right)\beta^\alpha, \qquad K_0 = \left(\frac{r}{w}\right)^{\alpha-1}\left(\frac{q}{A}\right)\beta^{\alpha-1} = L_0 \frac{w}{r} \beta$$ 
as required. To show that $(\ref{condb}) \implies (\ref{conda}),$ let $K_0, L_0$ be given with $K_0 = L_0\frac{w}{r}\beta$ and define $q = AL_0\left(\frac{w}{r}\beta\right)^\alpha$. Then the same algebra shows that $c(K,L)$ has a unique minimum on $Y(K,L)=q$ given by $(K,L) = (K_0, L_0)$. To prove the converse, suppose a function $Y$ satisfies (\ref{one}) and (\ref{two}). Let $K_0, L_0 > 0$ and choose $w$ and $r$ such that $K_0 = L_0\frac{w}{r}\beta$ (for example, let $w=K_0/L_0$ and $r = \beta$). Let $q > 0$ be given by (\ref{conda}). Then by (\ref{conda}) there is a Lagrange multiplier $\lambda$ (which may depend on $w, r, q, K_0$ and $L_0$) satisfying
$w = \lambda\frac{\partial Y}{\partial L}\rvert_{(K_0, L_0)}, r = \lambda\frac{\partial Y}{\partial K}\rvert_{(K_0, L_0)}$
from which it follows that
$$L_0 \frac{\partial Y}{\partial L}\Biggm\rvert_{(K_0, L_0)} \Bigg/ 
K_0 \frac{\partial Y}{\partial K}\Biggm\rvert_{(K_0, L_0)} = \frac{1}{\beta}.$$
Since this holds for any $K_0$ and $L_0$, we get
$$L \frac{\partial Y}{\partial L} =  \frac{1}{\beta} K \frac{\partial Y}{\partial K}$$
but from (\ref{one}), Euler's identity gives $Y = L \frac{\partial Y}{\partial L} + K \frac{\partial Y}{\partial K}$ and so
$$\frac{\partial Y}{\partial L} = \frac{1}{\beta + 1}\frac{Y}{L}, \qquad \frac{\partial Y}{\partial K} = \frac{\beta}{\beta + 1}\frac{Y}{K}$$
from which it follows that $Y = \mathrm{const.} \times K^\alpha L^{1-\alpha}$ with $\alpha = \beta/(\beta + 1)$ as required.
\end{proof}
\section{Remarks}
\subsection{} In words, Theorem \ref{maintheorem} shows the following: any firm with constant returns to scale which minimizes costs of production along each isoquant in such a way that the labour share of total cost $wL/(wL + rK)$ is a constant which is independent of $w, r,$ and the output level $q$ must have a Cobb-Douglas production function. In other words, if the labour share of cost remains constant when cost is minimized for any given output level then the production function is Cobb-Douglas. 
\subsection{} This is a stronger property than the property that the labour share of output is constant when profit is (globally) maximized given $w$ and $r$ because, if the profit function for a firm with constant returns to scale has a global maximum then the maximum profit is zero and so, when profit is maximized, the output $Y(K,L)$ is equal to the cost $wL+rK$.
\subsection{} The author thanks Dr. Q. Gashi for valuable comments.

\bibliography{econbib}
\bibliographystyle{plain}
\end{document}